
\documentclass[letterpaper, 10 pt, conference]{ieeeconf}  

\IEEEoverridecommandlockouts                              

\overrideIEEEmargins                                      




\bibliographystyle{IEEEtran}


\usepackage[sort, nocompress]{cite}
\usepackage{amsmath,amssymb,amsthm, amsfonts, mathrsfs}
\usepackage{graphicx}
\usepackage{dsfont}
\usepackage{xcolor}
\usepackage{comment}
\usepackage{hyperref} 

\usepackage{cleveref}
\Crefname{equation}{}{}
\crefname{equation}{}{}
\usepackage{footmisc}
\usepackage{csquotes}

\usepackage{enumitem}
\usepackage{booktabs}

\usepackage{algorithm}
\usepackage{algpseudocode}

\usepackage{caption, subcaption}

\def\BibTeX{{\rm B\kern-.05em{\sc i\kern-.025em b}\kern-.08em
    T\kern-.1667em\lower.7ex\hbox{E}\kern-.125emX}}



\theoremstyle{plain}
\newtheorem{theorem}{Theorem}

\theoremstyle{definition}

\newcommand{\ie}{\textit{i.e., }}
\newcommand{\eg}{\textit{e.g., }}



\newcommand{\cpt}{c} 
\newcommand{\sol}{F} 


\title{\LARGE \bf
Neural Spline Operators for Risk Quantification in Stochastic Systems
}

\author{Zhuoyuan Wang$^{1}$, Raffaele Romagnoli$^{2}$, Kamyar Azizzadenesheli$^{3}$ and Yorie Nakahira$^{1}$
\thanks{$^{1}$Zhuoyuan Wang and Yorie Nakahira are with the Department of Electrical and Computering Engineering,
        Carnegie Mellon University, Pittsburgh, USA.
        {\tt\small \{zhuoyuaw,ynakahir\}@andrew.cmu.edu}}%
\thanks{$^{2}$Raffaele Romagnoli is with the School of Science and Engineering, Department of Mathematics and Computer Science, Duquesne University, Pittsburgh, USA.
        {\tt\small romagnolir@duq.edu}}%
\thanks{$^{3}${\tt\small kaazizzad@gmail.com}}%
\thanks{This work is sponsored in part by the PRESTO Grant Number JPMJPR2136 from the Japan Science and Technology Agency, in part by the National Science Foundation under Grant No. 2442948, and in part by the Department of the Navy, Office of Naval Research, under award number N00014-23-1-2252. The views expressed are those of the authors and do not reflect the official policy or position of the US Navy, Department of Defense, or the US Government.}%
\thanks{The authors are grateful to Prof. Giovanni Leoni at Carnegie Mellon University for insightful discussions on the continuity of PDEs.}%
}

\begin{document}

\maketitle
\thispagestyle{empty}
\pagestyle{empty}

\begin{abstract}

Accurately quantifying long-term risk probabilities in diverse stochastic systems is essential for safety-critical control. However, existing sampling-based and partial differential equation (PDE)-based methods often struggle to handle complex varying dynamics. Physics-informed neural networks learn surrogate mappings for risk probabilities from varying system parameters of fixed and finite dimensions, yet can not account for functional variations in system dynamics. To address these challenges, we introduce physics-informed neural operator (PINO) methods to risk quantification problems, to learn mappings from varying \textit{functional} system dynamics to corresponding risk probabilities. Specifically, we propose Neural Spline Operators (NeSO), a PINO framework that leverages B-spline representations to improve training efficiency and achieve better initial and boundary condition enforcements, which are crucial for accurate risk quantification. We provide theoretical analysis demonstrating the universal approximation capability of NeSO. We also present two case studies, one with varying functional dynamics and another with high-dimensional multi-agent dynamics, to demonstrate the efficacy of NeSO and its significant online speed-up over existing methods. The proposed framework and the accompanying universal approximation theorem are expected to be beneficial for other control or PDE-related problems beyond risk quantification.







\end{abstract}

\section{Introduction}

Accurately quantifying long-term risk probabilities online across a wide range of stochastic systems is crucial to designing safe control policies~\cite{wang2022myopically}. Estimating these probabilities is non-trivial due to the rare nature of risk events and the long time horizons that need to be accounted for, resulting in heavy computational cost for sampling-based methods~\cite{rubino2009rare}. Partial differential equation (PDE)-based approaches alleviate this computational burden by directly characterizing the evolution of risk probabilities of different initial states and time horizons through PDEs, but they often struggle to scale efficiently online with varying forms of system dynamics~\cite{chern2021safe}. Physics-informed neural network (PINN) methods can quantify risk probabilities online for multiple systems by accepting parameters of system dynamics as input~\cite{wang2023generalizable, wang2025physics}. However, these approaches are inherently limited as they cannot accommodate inputs with varying parameter dimensions or system dynamics of different functional forms. In this paper, we aim to efficiently and accurately estimate long-term risk probabilities (or equivalently, safety probabilities) for diverse stochastic systems characterized by varying dynamic functions and safe region descriptions. We specifically address the limitations of existing PINN-based methods and introduce an approach capable of simultaneously handling systems with different functional dynamics.

Long-term safety probabilities in stochastic systems can often be described by solutions of certain convection-diffusion equations~\cite{chern2021safe}. Traditional numerical methods for solving these PDEs include grid-based and sampling-based approaches, both of which face significant challenges in accounting for varying functional forms of system dynamics. Recent PINN methods overcome some limitations by learning parameterized PDEs corresponding to safety probabilities of parameterized system dynamics. These networks are trained to approximate PDE solutions directly, enabling the prediction of safety probabilities across the entire state space and over various long-term time horizons. However, these PINNs remain restricted to fixed finite dimensional parameterized forms of system dynamics and cannot directly handle dynamics described by arbitrary functions.

To overcome these challenges, we introduce physics-informed operator learning methods~\cite{kovachki2023neural, li2024physics} into risk quantification, enabling direct learning of mappings from general functional system dynamics to risk probability values. While operator learning and neural operators significantly broadens applicability in many problems~\cite{azizzadenesheli2024neural}, direct application to risk quantification is insufficient due to difficulties in enforcing PDE constraints effectively, which is crucial for accuracy in safety-critical scenarios. Thus, we propose Neural Spline Operators (NeSO), a novel physics-informed neural operator (PINO) learning framework leveraging B-spline representations~\cite{ahlberg2016theory} to efficiently enforce PDE constraints during training. The spline-based representation inherently offers enhanced smoothness, inital and boundary condition enforcement, and in particular analytical derivative calculations, which is essential but previously computationally expensive for physics-informed learning~\cite{raissi2019physics, li2024physics}. We also theoretically demonstrate the universal approximation capability of our proposed NeSO framework. The primary contributions of this paper are summarized as follows:

\begin{itemize}
\item Introduction of physics-informed operator learning methods to risk quantification to enable simultaneous handling of varying functional system dynamics.
\item Development of the Neural Spline Operator (NeSO), employing a spline-based representation for efficient physics-informed operator learning, accompanied by theoretical guarantees on universal approximation.
\item Demonstration of effectiveness and significant computational speed-ups across various systems, including those with diverse functional dynamics and high-dimensional multi-agent settings.
\end{itemize}



\section{Problem Formulation}
\label{sec:problem_formulation}


We consider the following stochastic dynamical systems
\begin{equation}
\label{eq:dynamics}
    dx_t = f(x_t, t) dt + \sigma dW_t,
\end{equation}
where $x \in \mathcal{X} \in \mathbb{R}^n$ is the state, $W_t \in \mathbb{R}^n$ is the standard Wiener process with $W_0 = \mathbf{0}$, and $\sigma \in \mathbb{R}^n$ is the magnitude of the noise, and $f: \mathbb{R}^{n+1} \rightarrow \mathbb{R}^n$ is the (closed-loop) dynamics of the system. We define the safe set as
\begin{equation}
\label{eq:safe_set}
\mathcal{C}_\alpha =\left\{x \in \mathbb{R}^n: \phi_\alpha(x) \geq 0\right\},
\end{equation}
where $\phi_\alpha: \mathbb{R}^n \rightarrow \mathbb{R}$ is a function parameterized by $\alpha$. 
Specifically, in this paper we consider safe sets (or sub safe sets for multi-agent systems introduced in section~\ref{sec:exp_multi_agent}) with $\phi_\alpha = - \sum_{k=1}^n \mathds{1} (x_k \notin [\alpha_k^-, \alpha_k^+])$ where $\alpha_k^\pm \in \mathbb{R}$ for all $k = 1, \cdots, n$ and $\mathds{1}(\cdot)$ is the indicator function, \ie the safe region is characterized by the $n$-dimensional bounding box with edge values $\alpha_k^\pm$. This characterization of safe set is used in many scenarios such as lane keeping in autonomous driving~\cite{gangadhar2022adaptive}, dynamic walking of humanoid-robot~\cite{nguyen20163d}, \textit{etc}.
The risk quantification problem aims to estimate the following long-term safety probability under dynamics~\eqref{eq:dynamics}.
\begin{equation}
\label{eq:safe_prob}
\begin{aligned}
     F(x, t) := \mathbb{P}(x_\tau \in \mathcal{C}_\alpha, \forall \tau \in [0, t] \mid x_{0} = x).
\end{aligned}   
\end{equation}
Eq.~\eqref{eq:safe_prob} characterizes the probability of the system~\eqref{eq:dynamics} remaining safe under time horizon $t$ starting from initial state $x$. 
Note that the safety probability $F$ in~\eqref{eq:safe_prob} is also dependent on the system dynamics~\eqref{eq:dynamics} and safe set function~\eqref{eq:safe_set}, but we drop the notations for conciseness.
This long-term safety probability is critical for designing safe control for stochastic systems~\cite{wang2022myopically}, and the goal is to accurately estimate such value for all $x \in \mathcal{X}$ and different time horizon $t > 0$, with varying dynamics function $f$ and safe set parameter $\alpha$.

\section{Related Work}
\label{sec:related_work}

\subsection{Risk Quantification}




A standard method to estimate~\eqref{eq:safe_prob} is to use sampling-based methods such as Monte Carlo (MC), where the system dynamics~\eqref{eq:dynamics} is simulated starting from state $x$ for time horizon $t$ with multiple independent trajectories, and the safety probability is empirically estimated via the ratio of safe trajectories.
Multiple techniques such as importance sampling~\cite{janssen2013monte,cerou2012sequential}, subset simulation~\cite{au2001estimation}, multi-level Monte Carlo~\cite{giles2015multilevel} are further developed to reduce the sample complexity of MC, but such methods are intrinsically designed to quantify risk of single dynamics with fixed safe set. In this work, we propose methods to learn mappings of varying system dynamics and safe sets to the safety probabilities.

Another method to estimate the long-term safety probability is to solve a corresponding partial differential equation. Specifically, from~\cite{chern2021safe} we know that the safety probability~\eqref{eq:safe_prob} is the solution of the following convection diffusion equation
\begin{equation}
\label{eq:safe_prob_pde}
\frac{dF(x,t)}{dt} = f(x, t)\frac{dF(x,t)}{dx} + \frac{1}{2} \sigma^\top \sigma \operatorname{tr} \left[\frac{d^2 F(x,t)}{dx^2}\right],
\end{equation}
with the following initial and boundary conditions (ICBCs)
\begin{equation}
\label{eq:safe_prob_icbc} 
F(x,0) = 1, \forall x \in \mathcal{C}_\alpha, \quad F(x, t) = 0, \forall x \in \partial \mathcal{C}_\alpha, \forall t > 0.
\end{equation}
When the system dynamics is fixed and simple (\eg $f$ is a linear function), the PDE associated with the safety probability either has analytical solution or can be solved effectively with numerical solvers (\eg finite element method-based solvers~\cite{dhatt2012finite} and finite volume method-based solvers~\cite{eymard2000finite}). However, when the system dynamics is changing and is a general function, traditional solvers usually require special treatment to maintain stability~\cite{jha2023assessment, houston2020eliminating} and cannot solve the safety probability PDEs in efficient and generic manners.
Physics-informed learning methods leverage both data samples and PDE models, and are capable of estimating the safety probability mapping from fixed or parameterized dynamics function~\cite{wang2023generalizable}. However, how to effectively estimate the safety probability with varying and general functional system dynamics remains an open challenge. We propose the neural spline operator method to address this challenge, by solving~\eqref{eq:safe_prob_pde} with varying functional $f$ efficiently.

\subsection{Neural Operators and B-Splines}


Neural operators are generalizations of neural networks by learning mappings between infinite-dimensional function spaces rather than finite-dimensional Euclidean spaces~\cite{kovachki2023neural}. Specifically, a neural operator $G_\theta$, parameterized by $\theta$, maps an input function $h$ from one function space to another, formally written as:
\begin{equation}
G_\theta(h): \mathcal{H}(\mathbb{R}^m) \rightarrow \mathcal{H}(\mathbb{R}^l),
\end{equation}
where $\mathcal{H}(\mathbb{R}^k)$ denotes an appropriate function space defined on a given domain $\mathbb{R}^k$, for $k \in \mathbb{N}^+$. The key distinction between neural operators and classical neural networks lies in their internal operations. The standard linear transformation $W x + b$ (weighted sum of inputs $x$, with weights $W$ and biases $b$) in multilayer perceptrons in neural networks is replaced by a continuous integral operator of the form:
\begin{equation}
\int k(x, y) f(x) \, dx + b(y),
\end{equation}
where $f(x)$ is the input function, $k(x, y)$ is an integral kernel, and $b(y)$ is a bias function. This integral operator formulation allows neural operators to accept input and produce output as functions at arbitrary resolutions.
These properties make neural operators very competitive solutions to applications where input is of multiple resolutions and cannot be parameterized, such as weather forecast~\cite{pathak2022fourcastnet}, turbulence modeling~\cite{li2022fourier} and in particular learning surrogate maps for the solution operators of PDEs under the physics-informed neural operator learning setting~\cite{li2024physics, wang2021learning}.
In this work, we leverage physics-informed neural operators (PINOs) to learn solutions of PDEs corresponding to safety probabilities, and address risk quantification problems for systems with varying functional dynamics.
To the best of our knowledge, this is the first work to explore physics-informed operator learning in the context of risk quantification. 
In addition, although there are multiple existing architectures for physics-informed neural operator learning such as Fourier neural operators (FNO)~\cite{li2020fourier}, graph neural operators (GNO)~\cite{li2020neural}, spectral neural operators (SNO)~\cite{fanaskov2023spectral} and DeepONets~\cite{lu2019deeponet}, expensive derivative calculations through automatic differentiation or finite difference methods are often required~\cite{li2024physics}, and the trained models may not always comply with initial and boundary conditions (ICBCs) that the PDE needs to satisfy~\cite{brecht2023improving}.
We develop a specific neural operator architecture based on B-splines that provides analytical derivative calculations and direct enforcement of ICBC constraints, to improve training efficiency for PINOs in risk quantification problems. 




B-splines are piece-wise polynomial functions derived from slight adjustments of Bezier curves, aimed at obtaining polynomial curves that tie together smoothly~\cite{ahlberg2016theory}. 
B-splines are used in combination with finite element methods~\cite{jia2013reproducing} and through the variational dual formulation~\cite{sukumar2024variational} to solve PDEs.
B-splines are used as basis functions for surface reconstruction~\cite{iglesias2004functional}, nonlinear system modeling~\cite{wang2022modeling}, image segmentation~\cite{cho2021differentiable}, and controller design for dynamical systems~\cite{chen2004learning}. 
B-splines with weights learned from neural networks (NNs) are used to approximate fixed or parameterized ODEs~\cite{romagnoli2024building} and PDEs~\cite{doleglo2022deep, zhu2024best}, and specifically PDEs for safety probabilities~\cite{wang2025physics}, but these methods cannot account for functional inputs, thus not varying functional system dynamics in risk quantification.
We leverage the advantages of B-splines and incorporate it into a physics-informed neural operator learning framework, to learn mappings of safety probabilities from functional dynamics and varying safe sets, for efficient risk quantification of diverse stochastic systems.




\section{Proposed Method}
\label{sec:proposed_method}

In this section, we present our proposed neural spline operator (NeSO) method, which combines both data and PDE model for learning on functional spaces, to efficiently address the risk quantification problem with varying functional dynamics.
Specifically, we aim to design neural operators that learn a mapping from the dynamics function $f: \mathbb{R}^{n+1} \rightarrow \mathbb{R}^n$ and the safe set parameter $\alpha \in \mathbb{R}^\xi$, to the safety probability $F: \mathbb{R}^{n+1} \rightarrow \mathbb{R}$.

\subsection{B-Splines and Basis Functions}

In this section, we introduce B-splines as basis functions to approximate the safety probabilities.
We start by considering the one-dimensional cases.
For one-dimensional state $x \in \mathbb{R}$ (assuming fixed time $t$), the B-spline basis functions are given by the Cox-de Boor recursion formula:
\small
\begin{equation}
\label{eq:Cox_de_Boor}
B_{i,d}(x) = \frac{x - \hat{x}_i}{\hat{x}_{i+d} - \hat{x}_i} B_{i,d-1}(x) + \frac{\hat{x}_{i+d+1} - x}{\hat{x}_{i+d+1} - \hat{x}_{i+1}} B_{i+1,d-1}(x),
\end{equation}
\normalsize
and
\begin{equation}
\label{eq:Cox_de_Boor_interval}
B_{i,0}(x) = \begin{cases}
1, & \hat{x}_i \leq x < \hat{x}_{i+1}, \\
0, & \text{otherwise}.
\end{cases}
\end{equation}
Here, $B_{i,d}(x)$ denotes the value of the $i$-th B-spline basis of order $d$ evaluated at $x$, and $\hat x_i \in (\hat{x}_i)_{i=1}^{\ell+d+1}$ is a non-decreasing vector of knot points. Since a B-spline is a piece-wise polynomial function, the knot points determine in which polynomial the parameter $x$ belongs. 
While there are multiple ways of choosing knot points, we use $(\hat{x}_i)_{i=1}^{\ell+d+1}$ with $\hat{x}_1 = \hat{x}_2 = \cdots = \hat{x}_{d+1}$ and $\hat{x}_{\ell+1} = \hat{x}_{\ell+2} = \cdots = \hat{x}_{\ell+d+1}$, and for the remaining knot points we select equispaced values. 
Note that for the proposed framework we define the B-spline basis functions in a normalized domain $[0, 1]$, and the basis functions for any safe set domain of interest $[\alpha_k^-, \alpha_k^+]$ can be obtained through rescaling.
We then define the control points
\begin{equation}
\label{eq:control_pts_1d}
    \cpt := [\cpt_{1}, \cpt_{2}, \dots, \cpt_{{\ell}}],
\end{equation}
and the B-spline basis functions vector
\begin{equation}
\label{eq:bs_functions_1d}
    B_d(x) := [B_{1,d}(x),  B_{2,d}(x), \dots, B_{\ell,d}(x)]^\top.
\end{equation}
Then, we can approximate the safety probability $\sol(x, t)$ at a fixed time $t$ with 
\begin{equation} \label{eq:approx_sol_1d}
    \hat\sol(x, t)= \cpt B_d(x).
\end{equation}
Note that with our choice of knot points, we ensure the initial and final values of $\hat{\sol}(x)$ coincide with the initial and final control points $c_1$ and $c_\ell$. This property will be used later to directly impose initial conditions and Dirichlet boundary conditions with the neural spline operator.

More generally, for a $n$-dimensional state $x = [x_1, \cdots, x_n] \in \mathbb{R}^n$ and time $t \in \mathbb{R}^+$, we can generate B-spline basis functions based on the Cox-de Boor recursion formula~\eqref{eq:Cox_de_Boor} and~\eqref{eq:Cox_de_Boor_interval} along each dimension $x_k$ and $t$ with order $d_k$ for $x_1, \cdots, x_n$ and $d_t$ for $t$.
We use $B_{i_k,d_k}(x_k)$ to denote the B-spline basis of order $d_k$ for the $i_k$-th function of $x_k$, and $B_{i_t,d_t}(t)$ for $i_t$-th function of $t$ with order $d_t$. 

The $n+1$-dimensional control point tensor will be given by 
$C = [\cpt_{i_1, \cdots, i_n, i_t}]_{\ell_1 \times \cdots \times \ell_n \times l_t}$, 
where $i_k$ is the $k$-th state index of the control point, and $\ell_k$ is the number of control points along the $k$-th dimension. Similarly $i_t$ and $\ell_t$ are the index and the number of control points along $t$. The safety probability on the $n+1$-dimensional space can then be approximated with B-splines and control points via
\vspace{-0.3em}
\begin{equation}
\label{eq:bs_approx_n_dim}
\begin{aligned}
    \hat{\sol}(x_1, & \cdots,x_n, t) = \sum_{i_1=1}^{\ell_1}  
    \cdots \sum_{i_n=1}^{\ell_n}  \\ \quad \sum_{i_t=1}^{\ell_t}  & c_{i_1,\cdots,i_n, i_t}B_{i_1, d_1}(x_1)\cdots B_{i_n, d_n}(x_n) B_{i_t, d_t}(t).
\end{aligned}
\end{equation}
We will use $C \cdot B_{\ell,d}(x, t)$ to denote the weighted sum of the right hand side of~\eqref{eq:bs_approx_n_dim} when the context is clear for the rest of the paper.




\begin{figure*}[t]
    \centering
    \includegraphics[width=0.86\textwidth]{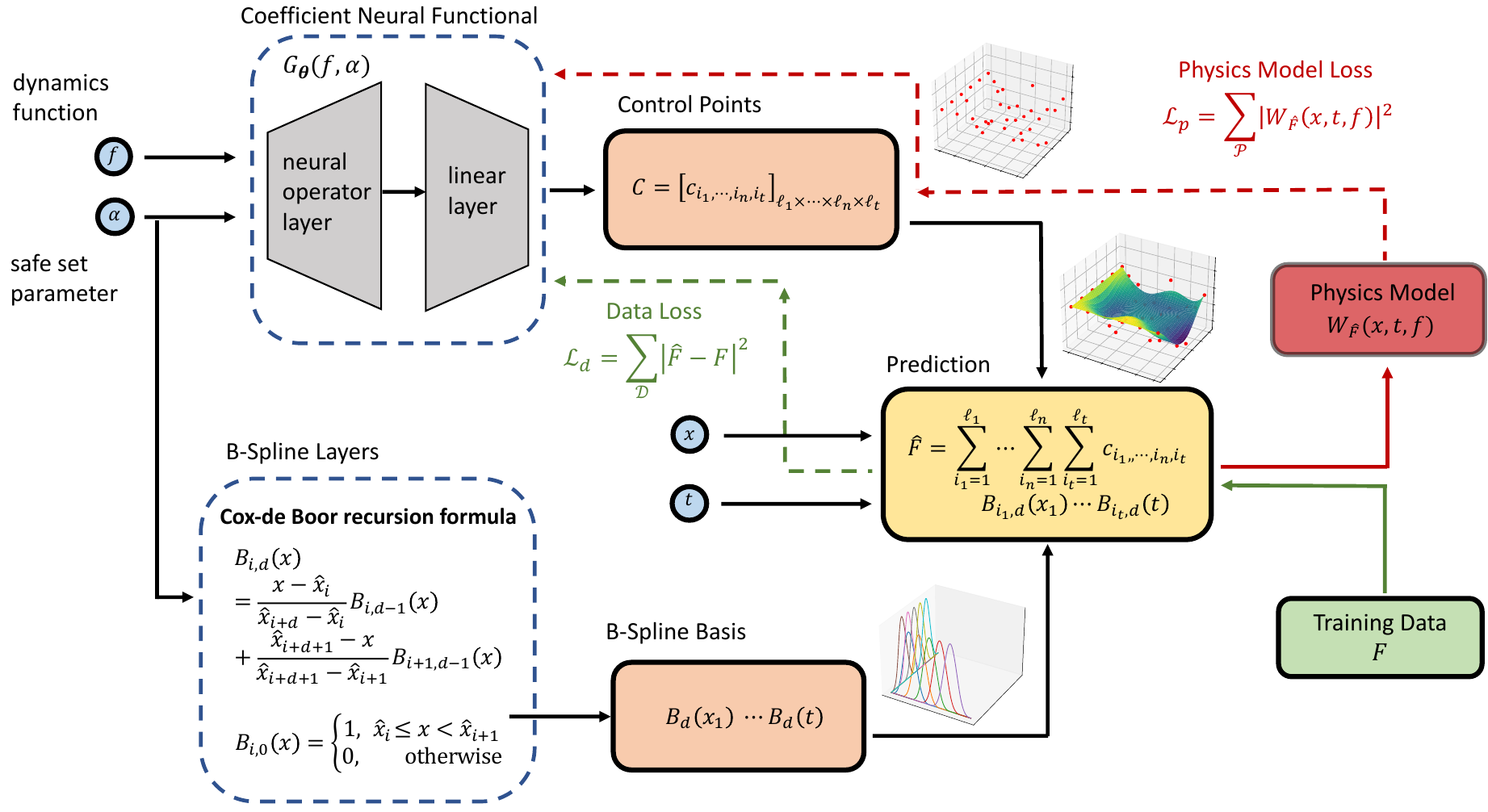}
    \vspace{0.9em}
    \caption{Diagram of the proposed neural spline operator (NeSO). The coefficient neural functional takes dynamics function $f$ and safe set parameters $\alpha$ as input and outputs the control points tensor $C$, which is then multiplied with the B-spline basis to produce the final output. Physics and data losses are imposed to train NeSO. Solid lines depict the forward pass, and dashed lines depict the backward pass.
    }
    \label{fig:neso_diagram}
\end{figure*}

\subsection{Neural Spline Operators}

In this section, we introduce our proposed neural spline operator (NeSO). The overall diagram of the framework is shown in Fig.~\ref{fig:neso_diagram}.
The framework composites a coefficient neural functional that learns the control point tensor $C$ given the dynamics function $f$ and parameters $\alpha$, and the B-spline layers that produce basis functions $B_{\ell, d}$. 

During the forward pass, the coefficient neural functional with parameter $\theta$, denoted by $G_{\theta}(f, \alpha)$, learns the control point tensor $C$ from the input dynamics function $f$ and parameters $\alpha$. The learned control points are then multiplied with the B-spline basis functions $B_{\ell, d}$ via~\eqref{eq:bs_approx_n_dim} to get the approximation $\hat \sol(x, t) =  G_{\theta}(f, \alpha) \cdot B_{\ell, d}(x,t)$, and we use $\hat \sol(x, t) = G_{\theta}(f, \alpha)(x, t)$ for simplicity. Note that inside the coefficient neural functional $C = G_{\theta}(f, \alpha)$, any initial condition or Dirichlet boundary condition of a PDE can be directly specified using the control point values. For example, in a 2D case where the initial condition is given by $\sol (x, 0) = 0, \forall x$, we can set the first column of the control point tensor $c_{i_1, 1} = 0$ for all $i_1 = 1, \cdots, \ell_1$ and this will ensure that the initial condition is met.


For the backward pass, two losses in PINO training are imposed to efficiently and effectively train the neural spline operator. 
We define the PDE residual
\begin{equation}
\begin{aligned}
    W_{\hat F}(x, t, f) := & \frac{d \hat F(x,t)}{dt} - f(x, t)\frac{d \hat F(x,t)}{dx}  \\
    & \qquad - \frac{1}{2} \sigma^\top \sigma \operatorname{tr} \left[\frac{d^2 \hat F(x,t)}{dx^2}\right].
\end{aligned}
\end{equation} 
We first impose a physics model loss 
\begin{equation}
\label{eq:physics_loss}
    \mathcal{L}_p = \frac{1}{|\mathcal{P}|} \sum_{x \in \mathcal{P}}  |W_{\hat F}(x, t, f)|^2,
\end{equation} 
where $\mathcal{P}$ is the set of points sampled to evaluate the governing PDE.
When data is available, we can additionally impose a data loss 
\begin{equation}
    \mathcal{L}_d = \frac{1}{|\mathcal{D}|}\sum_{(x, t) \in \mathcal{D}} |\sol(x, t) - \hat{\sol}(x, t)|^2,
\end{equation} 
to capture the mean square error of the approximation, where $\sol$ is the data point for the safety probability, $\mathcal{D}$ is the data set, and $\hat \sol$ is the prediction from the neural spline operator. 
The total loss is given by
\begin{equation}
    \mathcal{L} = w_p \mathcal{L}_p + w_d \mathcal{L}_d,
\end{equation}
where $w_p$ and $w_d$ are the weights for physics and data losses. Specific choices of $w_p$ and $w_d$ will be given in the case study section, and the general rule is to increase $w_p$ when the PDE model is accurate and increase $w_d$ when the data are reliable, and choose the weights so that the magnitude of the gradients matches the choice of the learning rate for training.

The coefficient neural functional composites a neural operator layer and a linear neural network layer, to achieve learning from functional spaces to control point tensors. The neural operator layer can take many architectures such as Fourier neural operators (FNO)~\cite{li2020fourier}, graph neural operator (GNO)~\cite{li2020neural}, U-shaped neural operator~\cite{li2020multipole,rahman2022u} and low-rank neural operator (LNO)~\cite{kovachki2023neural}. We treat the safe set parameters $\alpha$ as constant functions to adapt to existing neural operator architectures~\cite{yang2021seismic,shi2024broadband}. In the case studies in section~\ref{sec:experiments}, we specifically consider FNO as the neural operator layer, since it is widely accepted in the operator learning community. Due to space limits, we refer the readers to~\cite{li2020fourier} for the implementation details of FNO.


Note that there are several advantages of the proposed neural spline operator.
\begin{enumerate}
    \item The neural operator formulation enables mapping from \textbf{dynamics functionals} to safety probabilities, which covers a wide variety of system dynamics for risk quantification.
    
    \item The B-spline functions admit \textbf{analytical derivatives}~\cite{butterfield1976computation}, enables efficient training with physics-informed losses~\eqref{eq:physics_loss}. Specifically, the $p$-th derivative of the $d$-th ordered B-spline is given by
    \small
    \begin{equation}
    \label{eq:b-spline_derivatives}
    \begin{aligned}
        &\frac{d^p}{d x^p} B_{i, d}(x) = \\
        &\frac{(d-1)!}{(d-p-1)!} \sum_{k=0}^p(-1)^k\binom{p}{k} \frac{B_{i+k, d-p}(x)}{\prod_{j=0}^{p-1}\left(\hat x_{i+d-j-1}-\hat x_{i+k}\right)}.
    \end{aligned}
    \end{equation}
    \normalsize

    \item Control points of B-splines allow \textbf{direct assignment of ICBCs} without any learning involved~\cite{wang2025physics}, which improves accuracy and efficiency in learning PDEs for risk quantification.
    
\end{enumerate}

\section{Theoretical Analysis}
\label{sec:theoretical_guarantees}

In this section, we provide theoretical analysis on the universal approximation properties of the proposed neural spline operators.

We assume that the state space $\mathcal{X}$ is a compact set in $\mathbb{R}^n$, and the time horizon $t \in \mathcal{T}$ where $\mathcal{T}$ is a compact set in $\mathbb{R}$. Let $\Omega_1 := \mathcal{X} \subset \mathbb{R}^{n}$ be the space for the arguments of the input functions, and $\Omega_2 := \mathcal{X} \times \mathcal{T} \subset \mathbb{R}^{n} \times \mathbb{R}$ be the space for the arguments of the output functions. 
We assume the dynamics function $f \in \mathcal{K}_1 \subset C(\Omega_1; \mathbb{R}^n)$, where $C(\Omega_1; \mathbb{R}^n)$ is the space of continuous functions that map $\Omega_1$ to $\mathbb{R}^n$. 
Since the parameter $\alpha$ is input to the neural operator as a function, with slight abuse of notation, in this section, we denote $\alpha$ as a constant function on $\mathcal{X}$ and we assume $\alpha \in \mathcal{K}_2 \subset C(\Omega_1; \mathbb{R}^\xi)$.
Similarly, we assume the safety probability $F \in \mathcal{K}_3 \subset C(\Omega_2
;\mathbb{R})$.
We assume that $\mathcal{K}_1$, $\mathcal{K}_2$ and $\mathcal{K}_3$ are continuous Sobolev spaces~\cite{kovachki2021universal}, which are subsets of Banach spaces with norm $\|\cdot\|_C = \max_{\Omega}\|\cdot\|$.\footnote{With slight abuse of notation, here $C$ denotes the space of continuous functions instead of control points. Continuity of the Sobolev spaces can be guaranteed if $kp>n$, where $k$ is the order of the derivative, $p$ is integrability component, and $n$ is the dimension~\cite[Section 5.6.3]{evans2022partial}.} We further assume the mapping between $(f,\alpha) \rightarrow F$ is continuous.\footnote{Such continuity holds with regularity conditions on $f$ and $\mathcal{C}$~\cite{krylov1996lectures}.}



\begin{theorem}
\label{thm:neso_universal_approximation}
Given dynamical system~\eqref{eq:dynamics} and safe set~\eqref{eq:safe_set}, for any $\epsilon>0$,
there exists the number of control points $\ell_1, \cdots, \ell_n, \ell_t \in \mathbb{N}^+$, order of B-spline basis $d_1, \cdots, d_n, d_t \in \mathbb{N}^+$, and a parameter set $\theta$ for the coefficient neural functional $G$, such that
for any function $f \in \mathcal{F}$ and safe set parameter $\alpha \in \mathbb{R}^\xi$, and any $x \in \mathcal{X}$ and $t \in \mathcal{T}$, and the corresponding safety probability $F(x,t)$ defined in~\eqref{eq:safe_prob}, the following holds
\begin{equation}
\begin{aligned}
\big| F(x,t)-G_\theta(f, \alpha)(x, t) \big| \leq \epsilon,
\end{aligned}
\end{equation}
where
\begin{equation}
\label{eq:weighted_sum_neso}
    G_\theta(f, \alpha)(x, t) = G_\theta(f, \alpha) \cdot B_{\ell,d}(x, t),
\end{equation}
is the dot product of the control points and B-spline basis functions evaluated at $(x,t)$.

\end{theorem}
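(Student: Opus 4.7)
The plan is to decompose the approximation error via a triangle inequality into a B-spline representation error and a neural-operator approximation error, handle each with a classical universal approximation result, and then combine them. Specifically, I would introduce an intermediate ``ideal control point map'' $C^\ast : \mathcal{K}_1 \times \mathcal{K}_2 \to \mathbb{R}^{\ell_1 \times \cdots \times \ell_n \times \ell_t}$, show that the B-spline representation $C^\ast(f,\alpha)\cdot B_{\ell,d}(x,t)$ is uniformly close to $F(x,t)$, and then show that the neural functional $G_\theta$ can be chosen to approximate $C^\ast$ uniformly. Triangle inequality then yields the stated bound.

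First I would invoke the classical tensor-product B-spline approximation theorem on the compact rectangular domain $\Omega_2 = \mathcal{X}\times\mathcal{T}$: for any continuous $F(\cdot;f,\alpha) \in C(\Omega_2;\mathbb{R})$, one can choose orders $(d_1,\ldots,d_n,d_t)$ and control point counts $(\ell_1,\ldots,\ell_n,\ell_t)$ such that the best B-spline fit is within a prescribed tolerance in the sup norm. Because the map $(f,\alpha) \mapsto F$ is assumed continuous and $\mathcal{K}_1\times\mathcal{K}_2$ is compact in the chosen Sobolev/Banach topology, its image is a compact, hence equicontinuous, subset of $C(\Omega_2;\mathbb{R})$ by Arzel\`a--Ascoli. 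Equicontinuity gives a common modulus of continuity, which allows me to pick a \emph{single} tuple $(\ell,d)$ achieving uniform B-spline error at most $\epsilon/2$ simultaneously for all $(f,\alpha) \in \mathcal{K}_1\times\mathcal{K}_2$.

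Next I would define $C^\ast(f,\alpha)$ as the $L^2$ projection of $F(\cdot;f,\alpha)$ onto the span of the fixed tensor-product B-spline basis; since projection onto a finite-dimensional subspace is a bounded linear operation, $(f,\alpha)\mapsto C^\ast(f,\alpha)$ is continuous from the compact set $\mathcal{K}_1\times\mathcal{K}_2$ into the finite-dimensional tensor space. I would then apply the universal approximation theorem for neural operators (e.g., the FNO-based result of Kovachki et al.~\cite{kovachki2021universal} in the form used to establish universal approximation for PINO-type architectures): since $C^\ast$ is continuous with finite-dimensional codomain on a compact input set, there exist parameters $\theta$ of the coefficient neural functional such that
\begin{equation*}
\sup_{(f,\alpha)\in\mathcal{K}_1\times\mathcal{K}_2} \| G_\theta(f,\alpha) - C^\ast(f,\alpha)\|_\infty \leq \frac{\epsilon}{2M},
\end{equation*}
where $M := \sup_{(x,t)\in\Omega_2}\|B_{\ell,d}(x,t)\|_1$ is finite because each B-spline basis function is bounded. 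Combining via
\begin{equation*}
|F(x,t) - G_\theta(f,\alpha)(x,t)| \leq |F(x,t) - C^\ast(f,\alpha)\cdot B_{\ell,d}(x,t)| + \|G_\theta(f,\alpha)-C^\ast(f,\alpha)\|_\infty \cdot M
\end{equation*}
then yields the required $\epsilon$.

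The main obstacle will be making the first step truly uniform across the infinite-dimensional input set, since off-the-shelf B-spline error bounds are stated pointwise in the function being approximated and typically depend on its modulus of continuity or higher derivatives. My plan to overcome this is precisely the compactness/equicontinuity argument above: by pushing the compact set $\mathcal{K}_1\times\mathcal{K}_2$ forward through the assumed-continuous solution map, I obtain an equicontinuous family in $C(\Omega_2;\mathbb{R})$, from which a uniform modulus of continuity and hence a uniform B-spline approximation rate follow. A secondary technical point is to verify that the coefficient neural functional architecture used in practice (FNO layer plus a linear output layer, with $\alpha$ treated as a constant function) indeed falls within the class for which universal approximation to a finite-dimensional target is known; this should be straightforward since composing a universal neural operator with a linear readout preserves universality onto $\mathbb{R}^{\ell_1\cdots\ell_t}$.
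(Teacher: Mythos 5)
Your proposal is correct and rests on the same top-level decomposition as the paper's proof: introduce the ideal control-point map $C^\ast$, bound the B-spline representation error $\lvert F - C^\ast\cdot B_{\ell,d}\rvert$, bound the coefficient approximation error $\lVert G_\theta - C^\ast\rVert$, and combine by the triangle inequality with the boundedness of the basis. The execution differs in two respects. First, the paper factors the coefficient neural functional to mirror the actual architecture: a neural operator layer $\hat W$ first approximates the solution map $h\mapsto g_h$ in the function space $\mathcal{Y}$ (invoking the operator universal approximation theorem there), and a separate network $\hat C^\ast$ then approximates the projection-coefficient map $g_h\mapsto B_\phi^{-1}b(g_h)$, with the error propagated through explicit Lipschitz constants $L_1 = \lambda_{\min}(B_\phi)^{-1}\bigl(\sum_i\lVert\phi_i\rVert^2\bigr)^{1/2}$ and $L_2=\lambda_{\max}(B_\phi)$ of the projection; you instead approximate the composite map $(f,\alpha)\mapsto C^\ast(f,\alpha)$ in one shot, which is cleaner but leaves implicit that the concrete FNO-plus-linear-readout architecture realizes this. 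Second, and to your credit, you explicitly address a point the paper glosses over: that a \emph{single} choice of $(\ell,d)$ must achieve the spline tolerance $\epsilon_3$ uniformly over the entire image $\{F(\cdot;f,\alpha)\}$, which you obtain by pushing the compact set $\mathcal{K}_1\times\mathcal{K}_2$ through the continuous solution map and applying Arzel\`a--Ascoli to get a common modulus of continuity; the paper simply asserts richness of $V_N$ ``for any target function'' without this uniformity argument. You also work directly in the sup norm, matching the pointwise statement of the theorem, whereas the paper's final estimate is in the $\mathcal{Y}$-norm and relies on the ambient norm being $\max_\Omega\lVert\cdot\rVert$ to recover the pointwise claim. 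Net effect: your route is slightly more direct and closes a uniformity gap; the paper's route buys a construction that maps one-to-one onto the two-layer architecture actually implemented.
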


\begin{proof}
    See Appendix~\ref{sec:proof}.
\end{proof}

\section{Case Studies}
\label{sec:experiments}

In this section, we present two case studies for the proposed neural spline operator methods to estimate safety probabilities, specifically on a 1-dimensional system with varying functional dynamics in section~\ref{sec:exp_func_dyn}, and on a 14-dimensional multi-agent system with varying dynamics and safe sets in section~\ref{sec:exp_multi_agent}. Code is available at \href{https://github.com/jacobwang925/NeSO}{https://github.com/jacobwang925/NeSO}.

\subsection{Functional Dynamics}
\label{sec:exp_func_dyn}
We consider system~\eqref{eq:dynamics} with $x \in \mathbb{R}$ and safe set~\eqref{eq:safe_set} with $\alpha = 4$ and $\phi_\alpha(x) = x-4$ fixed. The dynamics functional of the system is given by
\begin{equation}
    f(x, t) = A_1 \sin\left( 2\pi \frac{\omega_1}{10} t + \psi_1 \right) + A_2 \sin\left( 2\pi \frac{\omega_2}{10} t + \psi_2 \right),
\end{equation}
where $A_1, A_2 \sim \mathcal{U}(-1, 1)$ are random amplitudes, $\omega_1, \omega_2 \sim \mathcal{U}(0.5, 2)$ are randomly chosen frequencies, and $\psi_1, \psi_2 \sim \mathcal{U}(0, 2\pi)$ are random phase shifts. Here, $\mathcal{U}(a, b)$ denotes uniform distribution in $[a, b]$. With probability 0.5, we set  $A_2 = 0$, reducing the function to a single sine wave.
The goal is to estimate the following long-term recovery probability, which is a slight variant of~\eqref{eq:safe_prob}.
\begin{equation}
\label{eq:recovery_prob_def}
    F(x,t) := \mathbb{P}\left( \exists \tau \in [0, t], \text{ s.t. } x_\tau \in \mathcal{C}_\alpha \mid x_{0} = x \right).
\end{equation}
This characterizes the probability of the system recovering to the safe region for at least once under time horizon $t$ starting from an initial state $x$. Such probability can be estimated either via MC, or through solving~\eqref{eq:safe_prob_pde}, with the following modified initial and boundary condition~\cite{chern2021safe}
\begin{equation}
\label{eq:recovery_icbc}
    F(x,0) = 0, \forall x \in \mathcal{C}_\alpha^c, \quad F(x, t) = 1, \forall x \in \partial \mathcal{C}_\alpha, \forall t > 0,
\end{equation}
where $\mathcal{C}_\alpha^c$ is the complement of $\mathcal{C}_\alpha$, \ie unsafe region. In this case study, our aim is to estimate~\eqref{eq:recovery_prob_def} in the state-temporal space $(x, t) \in [-10, 4] \times [0,10]$, with varying system dynamics $f$.

The ground truth recovery probability can be calculated via the cumulative integral of the hitting time density:
\begin{equation}
    F(x, t) = \int_{0}^{t} \frac{(4 - x)}{\sqrt{2 \pi \tau^3}} \exp\left(-\frac{((4 - x) - S(\tau))^2}{2 \tau} \right) d\tau,
\end{equation}
where the integrated effect of the dynamics function is
\begin{equation}
    S(t) = \int_{0}^{t} f(x_\tau, \tau) \, d\tau.
\end{equation}

For the choice of neural operators, we compare the proposed NeSO method with Fourier neural operator (FNO)~\cite{li2020fourier} as the backbone of the neural coefficient functional, against an architecture only of FNO, to learn the mappings from the dynamics function $f$ directly to the recovery probability $F$. In both methods, we use 8 Fourier modes in spatial and temporal dimensions with width of 32, and 3 spectral convolution layers in the FNO module.
We train both NeSO and FNO with 10 independent random functions and corresponding ground truth solutions for data loss function, and impose physics loss based on~\eqref{eq:safe_prob_pde}, with weights $w_d = 3$ and $w_p = 1$. Additional loss on the residual of initial condition and boundary condition (ICBC)~\eqref{eq:recovery_icbc} is imposed for FNO with weight $w_{\text{ICBC}} = 10$. We train both models with Adam optimizer with an initial learning rate of 0.001 for 500 epochs. Both methods converge during training, and Fig.~\ref{fig:1d_compare_visual} visualizes the prediction results on a random testing function $f$. It can be seen that both methods generalize to unseen dynamics to provide accurate estimation, due to the functional learning properties of neural operators. Table~\ref{tab:recovery_results} summarizes the training time and prediction errors on 10 independent random testing functions for both methods. Mean square errors (MSE), mean absolute errors (MAE) and average relative errors are reported. It can be seen that with the B-spline representation, the proposed NeSO outperforms FNO in both training efficiency and prediction accuracy, due to the analytical derivative calculations during training and the direct assignment of ICBCs.

\begin{figure}
    \centering
    \includegraphics[width=0.48\textwidth]{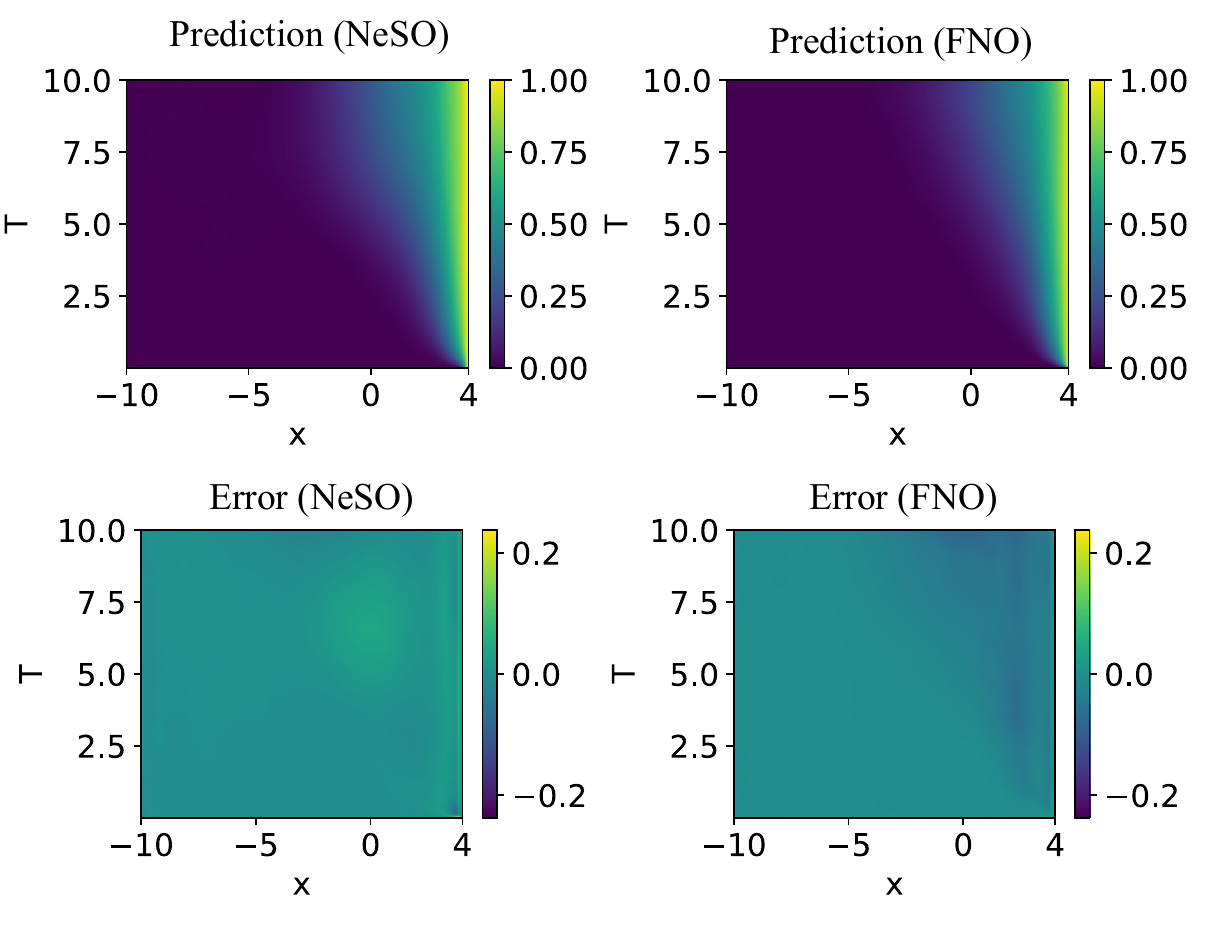}
    \vspace{-1.3em}
    \caption{Prediction results on a random testing function $f$ with NeSO and FNO.}
    \label{fig:1d_compare_visual}
\end{figure}

\begin{table}
    \centering
    \renewcommand{\arraystretch}{1.2}
    \setlength{\tabcolsep}{6pt}
    \caption{Test performance for recovery probability estimation over 10 test cases. Errors are reported as mean $\pm$ standard deviation.}
    \label{tab:recovery_results}
    \begin{tabular}{lcc}
        \toprule
        Metric & NeSO (Proposed) & FNO \\
        \midrule
        MSE ($\times 10^{-3}$) & $0.285 \pm 0.190$ & $5.298 \pm 9.296$ \\
        MAE ($\times 10^{-2}$) & $0.892 \pm 0.380$ & $2.827 \pm 2.968$ \\
        Rel. Error ($\times 10^{-2}$) & $5.697 \pm 1.888$ & $14.994 \pm 10.093$ \\
        Train Time (s) & $756$ & $1086$ \\
        \bottomrule
    \end{tabular}
    \vspace{-0.5em}
\end{table}

\subsection{High-Dimensional Multi-Agent Dynamics}
\label{sec:exp_multi_agent}


We consider a multi-agent system consisting of $N=7$ agents. Each agent $\Sigma_k$ is a mass-spring-damper system of the following dynamics,
\begin{equation}
\label{eq:multi_agent_sub_system}
d \, x^{(k)}_t = \left( A x^{(k)}_t + B u^{(k)}_t \right) dt+ \sigma_k w^{(k)}_t,
\end{equation}
where $x^{(k)} := [p^{(k)}, v^{(k)}] \in \mathbb{R}^2$ is the state, $u^{(k)} \in \mathbb{R}$ is the control input, and $w^{(k)} \in \mathbb{R}^2$ is the standard Wiener process to capture the noise, and $\sigma_k = 0.2 \; I_2 \in \mathbb{R}^2$ is the noise magnitude. Here, $I_N$ is the $N$-dimensional identity matrix and we will use this notation for the rest of this section. The system matrices are defined as $A = \begin{bmatrix} 0 & 1 \\ -\beta_1 & -\beta_2 \end{bmatrix}$ and $B = \begin{bmatrix} 0 \\ 1 \end{bmatrix}$, where $\beta_1, \beta_2 > 0$ are the physical parameters governing the spring and damping effects, respectively. The output of each agent is given by  
\begin{equation}
    y^{(k)}_t = C\,x^{(k)}_t, 
    \quad 
    \text{where } C = \begin{bmatrix} 1 & 0 \end{bmatrix}.
\end{equation}  
The control input $u^{(k)}_t$ of each dynamic agent is determined by the outputs $y^{(i)}_t$ of its neighboring agents, thereby enabling mutual interaction among the agents. Specifically, the control input is given by  
\begin{equation}
\label{eq:multi_agent_control}
u^{(k)}_t = -\sum_{i \in \mathcal{I}(k)} l_{k, i} \; y^{(i)}_t,
\end{equation}
where $\mathcal{I}(k) \subseteq\{1,2, \ldots, N\}$ denotes the set of neighboring agents of $\Sigma_k$, and $l_{k, i}$ is the feedback control gain that characterizes the interaction force of agent $i$ on agent $k$.


The agents interact through an undirected graph, encoded in a Laplacian matrix $L \in \mathbb{R}^{N \times N}$ whose $(i, k)$-th entry is defined by $L_{i, k}=l_{i, k}$, representing the interaction network among the agents. Specifically, in this case study we consider
\begin{equation}
L \;=\;
\begin{bmatrix}
5 & -1 & -1 & -1 & -1 & -1 & 0\\
-1 & 3 & 0 & -1 & 0 & 0 & -1\\
-1 & 0 & 2 & 0 & -1 & 0 & 0\\
-1 & -1 & 0 & 4 & -1 & -1 & 0\\
-1 & 0 & -1 & -1 & 4 & -1 & 0\\
-1 & 0 & 0 & -1 & -1 & 3 & 0\\
0 & -1 & 0 & 0 & 0 & 0 & 1
\end{bmatrix}.
\end{equation}
The overall system dynamics for the multi-agent system are obtained by stacking the states of all agents into the collective state vector $x_t = \bigl[x^{(1)\top}_t, \dots, x^{(N)\top}_t\bigr]^\top \in \mathbb{R}^{2N}$, yielding  
\begin{equation}
\label{eq:multi_agent_dynamics}
    dx_t = H\,x_t \, dt + \sigma w_t,
\end{equation}  
where $w_t \in \mathbb{R}^{2N}$ is the standard Wiener process and $\sigma = \text{diag}(\text{diag}(\sigma_1), \cdots, \text{diag} (\sigma_N)) = 0.2 \, I_{2_N}$. The matrix $H$ governs the coupled system dynamics and is defined as  
\begin{equation}
    H = I_N \otimes A - L \otimes (BC).
\end{equation}  
where $\otimes$ denotes the Kronecker product, and $BC$ describes the coupling between agents through their outputs.

The safe set of the multi-agent system is defined as~\eqref{eq:safe_set} with 
\begin{equation}
    \phi_\alpha(x) = \min_{k \in \{1, \dots, N\}} \left( \alpha_k - \left\|\bigl(\mathbf{t}_k^\top \otimes I_2\bigr)\,x \right\| \right),
\end{equation}  
where $\mathbf{t}_k$ is the $k$-th eigenvector of the Laplacian matrix $L$, and $\alpha_k \in \mathbb{R}^+$ is the corresponding safety threshold for agent $k$. The goal is to estimate the safety probability~\eqref{eq:safe_prob} for the entire multi-agent system~\eqref{eq:multi_agent_dynamics}.

Let $\lambda_k$ be the corresponding eigenvalue of the Laplacian matrix $L$ for mode $k$, let 
\begin{equation}
    \phi_{\alpha_k}(x^{(k)}) := \alpha_k - \left\|\bigl(\mathbf{t}_k^\top \otimes I_2\bigr)\,x \right\|,
\end{equation}
and we define safe sets for sub-systems as 
\begin{equation}
    \mathcal{C}_{k} := \{x^{(k)}: \phi_{\alpha_k}(x^{(k)}) \geq 0\}.
\end{equation}
The safety probability for sub-system $x^{(k)}$ is defined as
\begin{equation}
\label{eq:sub_sys_safe_function}
    F_k\bigl(x^{(k)},t\bigr):=\mathbb{P}(x^{(k)}_\tau \in \mathcal{C}_{k}, \forall \tau \in [0,t] \mid x^{(k)}_0 = x^{(k)}),
\end{equation} 
and we know its value evolves according to the following PDE~\cite{yasunaga2024orthogonal}
\begin{equation}
\label{eq:sub_sys_pde}
    \frac{\partial F_k}{\partial t} + v \,\frac{\partial F_k}{\partial p} - (\gamma_k\, p + \beta_2 \,v) \frac{\partial F_k}{\partial v} + \frac{\sigma_k^2}{2}\,\nabla^2 F_k = 0, 
\end{equation}  
with the following ICBC
\begin{equation}
    F_k\bigl(x^{(k)},0\bigr) = 1,  \forall x^{(k)} \in \mathcal{C}_{k},
    \;
    F_k\bigl(x^{(k)},t\bigr) = 0, \forall x^{(k)} \in \partial \mathcal{C}_{k}.
\end{equation}
Here, $ \gamma_k = \beta_1 + \lambda_k $, and the spatial domain is constrained by a threshold $[-\alpha_k, \alpha_k]$.
It is shown in~\cite[Theorem 1]{yasunaga2024orthogonal} that the overall safety probability can be calculated via  
\begin{equation}
\label{eq:multiplication}
    F(x,t) = \prod_{k=1}^N F_k\bigl(x^{(k)},t\bigr), 
\end{equation}
which is the multiplication of the safety probabilities for all sub-systems.

We consider fixed $\beta_1 = 1$, and varying $\beta_2 \sim \mathcal{U}(0.5, 2)$ and $\alpha_k \sim \mathcal{U}(1, 2)$ for different agent $k$. We then generate training data on 20 random systems with ground truth safety probability for each sub-system~\eqref{eq:sub_sys_safe_function} generated through solving~\eqref{eq:sub_sys_pde} with numerical solvers. With that, we train a NeSO that maps $(\beta_2, \lambda_k, \alpha_k)$ to $F_k$ for all $x^{(k)} \in [-\alpha_k, \alpha_k]$ and $t \in [0, 10]$. After training, we test on $\beta_2 = 1$ and $[\alpha_1, \cdots, \alpha_7] = [2, 2, 1, 1, 1, 1, 1]$, which is \textbf{not} seen during training. Fig.~\ref{fig:multi_safe_prob} shows the safety probability prediction results for the full 14-dimensional system, via prediction from NeSO on~\eqref{eq:sub_sys_safe_function} and multiplication in~\eqref{eq:multiplication}. The ground truth is obtained by running MC on the 14-dimensional multi-agent system~\eqref{eq:multi_agent_dynamics} with 10000 trajectories and empirically estimate~\eqref{eq:safe_prob}. It can be seen that NeSO reconstructs the safety probability of the high-dimensional multi-agent system with high accuracy, even though it did not train on data from this system. The time-averaged absolute difference between the MC and NeSO estimation is 0.0074, indicating the efficacy of the NeSO prediction. 
Table~\ref{tab:computation_times} shows the computation time for safety probability estimation of the 14-dimensional system through MC, solving PDEs~\eqref{eq:sub_sys_pde} for sub-systems, and the proposed NeSO. It can be seen that NeSO achieves a 2 orders of magnitude reduction in online computation time, compared to both the MC method and the PDE method. The training time for NeSO is 3999$\mathrm{s}$ on an L4 GPU. Based on this, we know that the total time for evaluating safety probabilities of $n$ dynamics and safe set instances is $4618n$ seconds for MC, $2079n$ seconds for the PDE method, and $3999 + 57n$ seconds for NeSO (including training). It can be seen that even with $n = 1$, the proposed NeSO method is faster than MC, and it will obtain better efficiency than the PDE method when the number of evaluated dynamics $n>2$.

\begin{figure}
    \centering
    \includegraphics[width=0.35\textwidth]{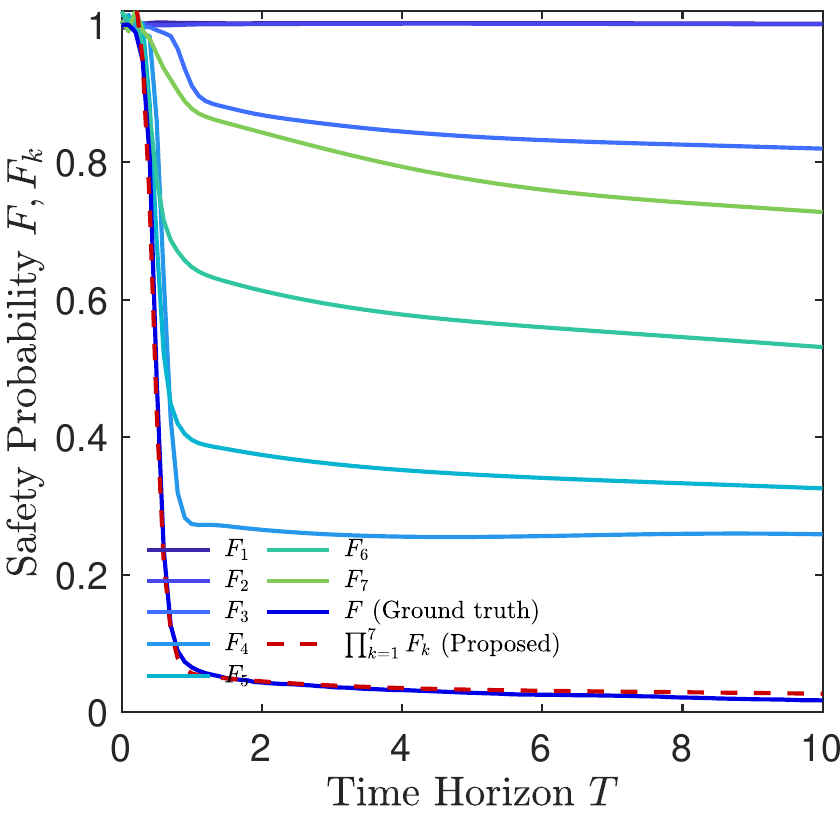}
    \caption{Safety probability estimation for the multi-agent system and its sub-systems.}
    \label{fig:multi_safe_prob}
\end{figure}

\begin{table}[t]
    \centering
    \renewcommand{\arraystretch}{1.2}
    \setlength{\tabcolsep}{8pt}
    \caption{Comparison of online computation times.}
    \label{tab:computation_times}
    \begin{tabular}{lccc}
        \toprule
        Method & MC & PDE & NeSO (proposed) \\
        \midrule
        Time (s) & 4618.48 & 2079.71 & \textbf{56.47}\\
        \bottomrule
    \end{tabular}
    \vspace{-0.5em}
\end{table}

\vspace{-0.3em}

\section{Conclusions}
\label{sec:conclusion}

In this paper, we consider the problem of risk quantification for stochastic systems with varying functional dynamics. We introduce physics-informed operator learning methods to this problem to directly find mappings from varying dynamic functionals and related parameters to the safety probabilities, while respecting all PDE constraints. Specifically, we propose neural spline operators, a framework for operator learning based on B-spline representations that improves training efficiency and estimation accuracy for risk quantification. We provide a theoretical analysis of the neural spline operator as universal approximators. We also show in experiments the efficacy of the proposed framework dealing with functional dynamics and high-dimensional multi-agent systems, where tremendous speedup is obtained compared to existing methods. 
The proposed framework and the accompanying universal approximation theorem are expected to benefit other related problems beyond risk quantification.


\bibliography{citation.bib}

\begin{thebibliography}{10}
\providecommand{\url}[1]{#1}
\csname url@samestyle\endcsname
\providecommand{\newblock}{\relax}
\providecommand{\bibinfo}[2]{#2}
\providecommand{\BIBentrySTDinterwordspacing}{\spaceskip=0pt\relax}
\providecommand{\BIBentryALTinterwordstretchfactor}{4}
\providecommand{\BIBentryALTinterwordspacing}{\spaceskip=\fontdimen2\font plus
\BIBentryALTinterwordstretchfactor\fontdimen3\font minus \fontdimen4\font\relax}
\providecommand{\BIBforeignlanguage}[2]{{%
\expandafter\ifx\csname l@#1\endcsname\relax
\typeout{** WARNING: IEEEtran.bst: No hyphenation pattern has been}%
\typeout{** loaded for the language `#1'. Using the pattern for}%
\typeout{** the default language instead.}%
\else
\language=\csname l@#1\endcsname
\fi
#2}}
\providecommand{\BIBdecl}{\relax}
\BIBdecl

\bibitem{wang2022myopically}
Z.~Wang, H.~Jing, C.~Kurniawan, A.~Chern, and Y.~Nakahira, ``Myopically verifiable probabilistic certificate for long-term safety,'' in \emph{2022 American Control Conference (ACC)}.\hskip 1em plus 0.5em minus 0.4em\relax IEEE, 2022, pp. 4894--4900.

\bibitem{rubino2009rare}
G.~Rubino and B.~Tuffin, \emph{Rare event simulation using Monte Carlo methods}.\hskip 1em plus 0.5em minus 0.4em\relax John Wiley \& Sons, 2009.

\bibitem{chern2021safe}
A.~Chern, X.~Wang, A.~Iyer, and Y.~Nakahira, ``Safe control in the presence of stochastic uncertainties,'' in \emph{2021 60th IEEE Conference on Decision and Control (CDC)}.\hskip 1em plus 0.5em minus 0.4em\relax IEEE, 2021, pp. 6640--6645.

\bibitem{wang2023generalizable}
Z.~Wang and Y.~Nakahira, ``A generalizable physics-informed learning framework for risk probability estimation,'' in \emph{Learning for Dynamics and Control Conference}.\hskip 1em plus 0.5em minus 0.4em\relax PMLR, 2023, pp. 358--370.

\bibitem{wang2025physics}
Z.~Wang, R.~Romagnoli, J.~Ratchford, and Y.~Nakahira, ``Physics-informed deep b-spline networks for dynamical systems,'' \emph{arXiv preprint arXiv:2503.16777}, 2025.

\bibitem{kovachki2023neural}
N.~Kovachki, Z.~Li, B.~Liu, K.~Azizzadenesheli, K.~Bhattacharya, A.~Stuart, and A.~Anandkumar, ``Neural operator: Learning maps between function spaces with applications to pdes,'' \emph{Journal of Machine Learning Research}, vol.~24, no.~89, pp. 1--97, 2023.

\bibitem{li2024physics}
Z.~Li, H.~Zheng, N.~Kovachki, D.~Jin, H.~Chen, B.~Liu, K.~Azizzadenesheli, and A.~Anandkumar, ``Physics-informed neural operator for learning partial differential equations,'' \emph{ACM/JMS Journal of Data Science}, vol.~1, no.~3, pp. 1--27, 2024.

\bibitem{azizzadenesheli2024neural}
K.~Azizzadenesheli, N.~Kovachki, Z.~Li, M.~Liu-Schiaffini, J.~Kossaifi, and A.~Anandkumar, ``Neural operators for accelerating scientific simulations and design,'' \emph{Nature Reviews Physics}, vol.~6, no.~5, pp. 320--328, 2024.

\bibitem{ahlberg2016theory}
J.~H. Ahlberg, E.~N. Nilson, and J.~L. Walsh, \emph{The Theory of Splines and Their Applications: Mathematics in Science and Engineering: A Series of Monographs and Textbooks, Vol. 38}.\hskip 1em plus 0.5em minus 0.4em\relax Elsevier, 2016, vol.~38.

\bibitem{raissi2019physics}
M.~Raissi, P.~Perdikaris, and G.~E. Karniadakis, ``Physics-informed neural networks: A deep learning framework for solving forward and inverse problems involving nonlinear partial differential equations,'' \emph{Journal of Computational physics}, vol. 378, pp. 686--707, 2019.

\bibitem{gangadhar2022adaptive}
S.~Gangadhar, Z.~Wang, H.~Jing, and Y.~Nakahira, ``Adaptive safe control for driving in uncertain environments,'' in \emph{2022 IEEE Intelligent Vehicles Symposium (IV)}.\hskip 1em plus 0.5em minus 0.4em\relax IEEE, 2022, pp. 1662--1668.

\bibitem{nguyen20163d}
Q.~Nguyen, A.~Hereid, J.~W. Grizzle, A.~D. Ames, and K.~Sreenath, ``3d dynamic walking on stepping stones with control barrier functions,'' in \emph{2016 IEEE 55th Conference on Decision and Control (CDC)}.\hskip 1em plus 0.5em minus 0.4em\relax IEEE, 2016, pp. 827--834.

\bibitem{janssen2013monte}
H.~Janssen, ``Monte-carlo based uncertainty analysis: Sampling efficiency and sampling convergence,'' \emph{Reliability Engineering \& System Safety}, vol. 109, pp. 123--132, 2013.

\bibitem{cerou2012sequential}
F.~C{\'e}rou, P.~Del~Moral, T.~Furon, and A.~Guyader, ``Sequential monte carlo for rare event estimation,'' \emph{Statistics and computing}, vol.~22, no.~3, pp. 795--808, 2012.

\bibitem{au2001estimation}
S.-K. Au and J.~L. Beck, ``Estimation of small failure probabilities in high dimensions by subset simulation,'' \emph{Probabilistic engineering mechanics}, vol.~16, no.~4, pp. 263--277, 2001.

\bibitem{giles2015multilevel}
M.~B. Giles, ``Multilevel monte carlo methods,'' \emph{Acta numerica}, vol.~24, pp. 259--328, 2015.

\bibitem{dhatt2012finite}
G.~Dhatt, E.~Lefran{\c{c}}ois, and G.~Touzot, \emph{Finite element method}.\hskip 1em plus 0.5em minus 0.4em\relax John Wiley \& Sons, 2012.

\bibitem{eymard2000finite}
R.~Eymard, T.~Gallou{\"e}t, and R.~Herbin, ``Finite volume methods,'' \emph{Handbook of numerical analysis}, vol.~7, pp. 713--1018, 2000.

\bibitem{jha2023assessment}
A.~Jha, O.~P{\'a}rtl, N.~Ahmed, and D.~Kuzmin, ``An assessment of solvers for algebraically stabilized discretizations of convection--diffusion--reaction equations,'' \emph{Journal of Numerical Mathematics}, vol.~31, no.~2, pp. 79--103, 2023.

\bibitem{houston2020eliminating}
P.~Houston, S.~Roggendorf, and K.~G. van~der Zee, ``Eliminating gibbs phenomena: A non-linear petrov--galerkin method for the convection--diffusion--reaction equation,'' \emph{Computers \& Mathematics with Applications}, vol.~80, no.~5, pp. 851--873, 2020.

\bibitem{pathak2022fourcastnet}
J.~Pathak, S.~Subramanian, P.~Harrington, S.~Raja, A.~Chattopadhyay, M.~Mardani, T.~Kurth, D.~Hall, Z.~Li, K.~Azizzadenesheli \emph{et~al.}, ``Fourcastnet: A global data-driven high-resolution weather model using adaptive fourier neural operators,'' \emph{arXiv preprint arXiv:2202.11214}, 2022.

\bibitem{li2022fourier}
Z.~Li, W.~Peng, Z.~Yuan, and J.~Wang, ``Fourier neural operator approach to large eddy simulation of three-dimensional turbulence,'' \emph{Theoretical and Applied Mechanics Letters}, vol.~12, no.~6, p. 100389, 2022.

\bibitem{wang2021learning}
S.~Wang, H.~Wang, and P.~Perdikaris, ``Learning the solution operator of parametric partial differential equations with physics-informed deeponets,'' \emph{Science advances}, vol.~7, no.~40, p. eabi8605, 2021.

\bibitem{li2020fourier}
Z.~Li, N.~Kovachki, K.~Azizzadenesheli, B.~Liu, K.~Bhattacharya, A.~Stuart, and A.~Anandkumar, ``Fourier neural operator for parametric partial differential equations,'' \emph{arXiv preprint arXiv:2010.08895}, 2020.

\bibitem{li2020neural}
------, ``Neural operator: Graph kernel network for partial differential equations,'' \emph{arXiv preprint arXiv:2003.03485}, 2020.

\bibitem{fanaskov2023spectral}
V.~S. Fanaskov and I.~V. Oseledets, ``Spectral neural operators,'' in \emph{Doklady Mathematics}, vol. 108, no. Suppl 2.\hskip 1em plus 0.5em minus 0.4em\relax Springer, 2023, pp. S226--S232.

\bibitem{lu2019deeponet}
L.~Lu, P.~Jin, and G.~E. Karniadakis, ``Deeponet: Learning nonlinear operators for identifying differential equations based on the universal approximation theorem of operators,'' \emph{arXiv preprint arXiv:1910.03193}, 2019.

\bibitem{brecht2023improving}
R.~Brecht, D.~R. Popovych, A.~Bihlo, and R.~O. Popovych, ``Improving physics-informed deeponets with hard constraints,'' \emph{arXiv preprint arXiv:2309.07899}, 2023.

\bibitem{jia2013reproducing}
Y.~Jia, Y.~Zhang, G.~Xu, X.~Zhuang, and T.~Rabczuk, ``Reproducing kernel triangular b-spline-based fem for solving pdes,'' \emph{Computer Methods in Applied Mechanics and Engineering}, vol. 267, pp. 342--358, 2013.

\bibitem{sukumar2024variational}
N.~Sukumar and A.~Acharya, ``Variational formulation based on duality to solve partial differential equations: Use of b-splines and machine learning approximants,'' \emph{arXiv preprint arXiv:2412.01232}, 2024.

\bibitem{iglesias2004functional}
A.~Iglesias, G.~Echevarr{\'\i}a, and A.~G{\'a}lvez, ``Functional networks for b-spline surface reconstruction,'' \emph{Future Generation Computer Systems}, vol.~20, no.~8, pp. 1337--1353, 2004.

\bibitem{wang2022modeling}
Y.~Wang, S.~Tang, and M.~Deng, ``Modeling nonlinear systems using the tensor network b-spline and the multi-innovation identification theory,'' \emph{International Journal of Robust and Nonlinear Control}, vol.~32, no.~13, pp. 7304--7318, 2022.

\bibitem{cho2021differentiable}
M.~Cho, A.~Balu, A.~Joshi, A.~Deva~Prasad, B.~Khara, S.~Sarkar, B.~Ganapathysubramanian, A.~Krishnamurthy, and C.~Hegde, ``Differentiable spline approximations,'' \emph{Advances in neural information processing systems}, vol.~34, pp. 20\,270--20\,282, 2021.

\bibitem{chen2004learning}
Y.~Chen, K.~L. Moore, and V.~Bahl, ``Learning feedforward control using a dilated b-spline network: Frequency domain analysis and design,'' \emph{IEEE Transactions on neural networks}, vol.~15, no.~2, pp. 355--366, 2004.

\bibitem{romagnoli2024building}
R.~Romagnoli, J.~Ratchford, and M.~H. Klein, ``Building hybrid b-spline and neural network operators,'' in \emph{2024 IEEE 63rd Conference on Decision and Control (CDC)}.\hskip 1em plus 0.5em minus 0.4em\relax IEEE, 2024, pp. 3611--3617.

\bibitem{doleglo2022deep}
K.~Doleg{\l}o, A.~Paszy{\'n}ska, M.~Paszy{\'n}ski, and L.~Demkowicz, ``Deep neural networks for smooth approximation of physics with higher order and continuity b-spline base functions,'' \emph{arXiv preprint arXiv:2201.00904}, 2022.

\bibitem{zhu2024best}
X.~Zhu, J.~Liu, X.~Ao, S.~He, L.~Tao, and F.~Gao, ``A best-fitting b-spline neural network approach to the prediction of advection--diffusion physical fields with absorption and source terms,'' \emph{Entropy}, vol.~26, no.~7, p. 577, 2024.

\bibitem{li2020multipole}
Z.~Li, N.~Kovachki, K.~Azizzadenesheli, B.~Liu, A.~Stuart, K.~Bhattacharya, and A.~Anandkumar, ``Multipole graph neural operator for parametric partial differential equations,'' \emph{Advances in Neural Information Processing Systems}, vol.~33, pp. 6755--6766, 2020.

\bibitem{rahman2022u}
M.~A. Rahman, Z.~E. Ross, and K.~Azizzadenesheli, ``U-no: U-shaped neural operators,'' \emph{arXiv preprint arXiv:2204.11127}, 2022.

\bibitem{yang2021seismic}
Y.~Yang, A.~F. Gao, J.~C. Castellanos, Z.~E. Ross, K.~Azizzadenesheli, and R.~W. Clayton, ``Seismic wave propagation and inversion with neural operators,'' \emph{The Seismic Record}, vol.~1, no.~3, pp. 126--134, 2021.

\bibitem{shi2024broadband}
Y.~Shi, G.~Lavrentiadis, D.~Asimaki, Z.~E. Ross, and K.~Azizzadenesheli, ``Broadband ground-motion synthesis via generative adversarial neural operators: Development and validation,'' \emph{Bulletin of the Seismological Society of America}, vol. 114, no.~4, pp. 2151--2171, 2024.

\bibitem{butterfield1976computation}
K.~R. Butterfield, ``The computation of all the derivatives of a b-spline basis,'' \emph{IMA Journal of Applied Mathematics}, vol.~17, no.~1, pp. 15--25, 1976.

\bibitem{kovachki2021universal}
N.~Kovachki, S.~Lanthaler, and S.~Mishra, ``On universal approximation and error bounds for fourier neural operators,'' \emph{Journal of Machine Learning Research}, vol.~22, no. 290, pp. 1--76, 2021.

\bibitem{evans2022partial}
L.~C. Evans, \emph{Partial differential equations}.\hskip 1em plus 0.5em minus 0.4em\relax American mathematical society, 2022, vol.~19.

\bibitem{krylov1996lectures}
N.~V. Krylov, \emph{Lectures on elliptic and parabolic equations in H{\"o}lder spaces}.\hskip 1em plus 0.5em minus 0.4em\relax American Mathematical Soc., 1996, vol.~12.

\bibitem{yasunaga2024orthogonal}
R.~Yasunaga, Y.~Nakahira, and Y.~Hori, ``Orthogonal modal representation in long-term risk quantification for dynamic multi-agent systems,'' \emph{IEEE Control Systems Letters}, 2024.

\bibitem{hornik1989multilayer}
K.~Hornik, M.~Stinchcombe, and H.~White, ``Multilayer feedforward networks are universal approximators,'' \emph{Neural networks}, vol.~2, no.~5, pp. 359--366, 1989.

\end{thebibliography}

\newpage

\appendix

\subsection{Proof of Theorem~\ref{thm:neso_universal_approximation}}
\label{sec:proof}

\begin{proof} (Theorem~\ref{thm:neso_universal_approximation})

We state the general proof of the universal approximation property of neural operators with basis functions, for learning continuous mappings between Sobolev function spaces. In the end, we show that this proof applies to our setting, where B-spline basis is considered and mappings to safety probabilities are studied.


\textbf{First, we prove the coefficient neural functional is a universal approximator of the optimal basis function weights.} We prove by construction. We start by showing the neural operator in the coefficient neural functional can approximate the target function, and show that a unique linear mapping can be learned by the linear layer in the coefficient neural functional to find the optimal weights.\footnote{Note that this is not the only possible setting for the coefficient neural functional to approximate the optimal basis function weights.}

We denote the input Sobolev function space as
\begin{equation}
\begin{aligned}
    \mathcal{H} &= H^k(\Omega_1; \mathbb{R}^{n+\xi}) := \Big\{ h \in L^2(\Omega_1; \mathbb{R}^{n+\xi}) \;\Big|\;\\
&\hspace{2cm} D^r h \in L^2(\Omega_1; \mathbb{R}^{n+\xi}),\; \forall |r| < k \Big\},
\end{aligned}
\end{equation}
where $k \in \mathbb{N}^+$ and $D^r h$ denotes the weak derivative of order $r$.
Similarly, we denote the output function space as
\begin{equation}
\begin{aligned}
    \mathcal{Y} &= Y^k(\Omega_2; \mathbb{R}) := \Big\{ g \in L^2(\Omega_2; \mathbb{R}) \;\Big|\;\\
&\hspace{2cm}D^r g \in L^2(\Omega_2; \mathbb{R}),\; \forall |r| < k \Big\}.
\end{aligned}
\end{equation}
Let $W : \mathcal{H} \to \mathcal{Y}$ be a continuous non-linear mapping, and let $h = (f, \alpha) \in \mathcal{K}_1 \times \mathcal{K}_2 \subset \mathcal{H}$. The goal is to approximate $g = W(h)$ via a neural operator, where $g\in \mathcal{K}_3 \subseteq \mathcal{Y}$ is the target function.

From the universal approximation theorem for neural operators (\eg FNO~\cite{kovachki2021universal}, DeepONet~\cite{lu2019deeponet}), there exists a continuous approximation of $W$ indicated as $\hat W$ defined over the same spaces, such that, for any $\epsilon_1>0$
\begin{equation}
\label{eq:no_layer_universal_approximation}
    \| W(h)-\hat W(h)\|<\epsilon_1.
\end{equation}
For a more compact notation, let us define $g_h\triangleq W(h)$ and define $\hat g_h\triangleq \hat W(h)$.

We then take $N < \infty$ orthogonal elements $\phi_i \in \mathcal{Y}$, and define a finite-dimensional subspace\footnote{The proof naturally generalizes to cases where $\phi_i \in \mathcal{Y}$ are not orthogonal, but \( B_{{\phi}_{i,j}} = \langle \phi_j, \phi_i \rangle \) is positive definite. In our case B-spline basis functions are orthogonal to each other.}
\begin{equation}
    V_N = \mathrm{span} \left\{ \phi_1, \ldots, \phi_N \right\} \subset \mathcal{Y}.
\end{equation}
For any function $g_h \in \mathcal{K}_3 \subseteq \mathcal{Y}$, we want to find a unique element $P_{V_N} g_h \in V_N$ such that
\begin{equation}
   \langle g_h - P_{V_N} g_h, v \rangle = 0 \quad \forall \; v \in V_N, 
\end{equation}
where $P_{V_N}$ is a projection from $\mathcal{Y}$ to $V_N$.
Since Sobolev spaces are Hilbert spaces, they are equipped with a norm $\Vert \cdot \Vert_{\mathcal{Y}}$ induced by the inner product. Considering
\begin{equation}
    P_{V_N} g_h = \sum_{j=1}^N c_j(g_h) \phi_j,
\end{equation}
we want
\begin{equation}
    \langle g - \sum_{j=1}^N c_j(g_h) \phi_j, \phi_i \rangle = 0 \quad \forall i = 1, \ldots, N.
\end{equation}
This equation can be written as
\begin{equation}
    \sum_{j=1}^N c_j(g_h) \langle \phi_j, \phi_i \rangle = \langle g, \phi_i \rangle.
\end{equation}
By defining the matrix \( B_{{\phi}_{i,j}} = \langle \phi_j, \phi_i \rangle \), which is invertible since $\phi_j$ are orthogonal thus linearly independent, we can find the coefficients of \( P_{V_N} \) denoted by $C^*$ with
\begin{equation}
    C^*(g_h) = B_{\phi}^{-1} b(g_h),
\end{equation}
where the elements of \( b(g_h) \) are \( b_i(g_h) = \langle g_h, \phi_i \rangle \).

For the uniqueness of the projection, consider the presence of \( v_1 \) and \( v_2 \) in \( V_N \) such that
\begin{equation}
    \langle g_h - v_1, v \rangle = \langle g_h - v_2, v \rangle = 0 \quad \forall v \in V_N.
\end{equation}
From the linearity of the inner product, we can rewrite the above equation as
\begin{equation}
    \langle v_1 - v_2, v \rangle = 0 \quad \forall v \in V_N.
\end{equation}
Since \( v_1 - v_2 \in V_N \), we can take \( v = v_1 - v_2 \), then we have
\begin{equation}
    \langle v_1 - v_2, v_1 - v_2 \rangle = 0.
\end{equation}
This implies that \( v_1 = v_2 \), which demonstrates the uniqueness of the projection.

For the vector \( b(g_h) \), we have
\begin{equation}
    \Vert b(g_h) \Vert_2^2 = \sum_{j=1}^N |\langle g_h, \phi_i \rangle|^2.
\end{equation}
By the Cauchy-Schwarz inequality
\begin{equation}
    |\langle g_h, \phi_i \rangle| \leq \Vert g_h \Vert_{\mathcal{Y}} \cdot \Vert \phi_i \Vert_{\mathcal{Y}},
\end{equation}
we get
\begin{equation}
    \Vert b(g_h) \Vert_2^2 = \sum_{j=1}^N |\langle g_h, \phi_i \rangle|^2 \leq \Vert g_h \Vert_{\mathcal{Y}}^2 \cdot \sum_{j=1}^N \Vert \phi_i \Vert_{\mathcal{Y}}^2.
\end{equation}
Hence,
\begin{equation}
    \Vert b(g_h) \Vert_2 \leq \left( \sum_{j=1}^N \Vert \phi_i \Vert_{\mathcal{Y}}^2 \right)^{1/2} \cdot \Vert g_h \Vert_{\mathcal{Y}}.
\end{equation}
Since \( B_\phi \in \mathbb{R}^{N \times N} \) is symmetric and positive-definite, the operator norm of \( B_{\phi}^{-1} \) is\begin{equation}
    \| B_{\phi}^{-1} \| = \frac{1}{\lambda_{\min}(B_{\phi})},
\end{equation}
where \( \lambda_{\min}(B_{\phi}) > 0 \) denotes the smallest eigenvalue of \( B_{\phi} \).
For any \( g_h, \hat g_h \in \mathcal{K}_3 \subseteq\mathcal{Y} \), we have:
\small
\begin{equation}
    \| C^*(g_h) - C^*(\hat g_h) \|_2 \leq \frac{1}{\lambda_{\min}(B_\phi)} \left( \sum_{i=1}^N \| \phi_i \|_{\mathcal{Y}}^2 \right)^{1/2} \| g_h - \hat g_h \|_{\mathcal{Y}}.
\end{equation}
\normalsize
We then define the following Lipschitz constants
\begin{equation}
    L_1 := \frac{1}{\lambda_{\min}(B_\phi)} \left( \sum_{i=1}^N \| \phi_i \|_{\mathcal{Y}}^2 \right)^{1/2},
\end{equation}
and
\begin{equation}
    L_2 := \lambda_{\max}(B_\phi),
\end{equation}
where $\lambda_{\max}(B_\phi)$ is the largest eigenvalue of $B_{\phi}$.
From the universal approximation of the neural operator layer~\eqref{eq:no_layer_universal_approximation}, we have
\begin{equation}
    \Vert g_h-\hat{g}_h \Vert_{\mathcal{Y}}\leq \epsilon_1,
\end{equation}
and we can choose $\epsilon_1=\frac{\epsilon}{4L_1 L_2}$ where $\epsilon>0$.
Hence, 
\begin{equation}
    \| C^*(g_h) - C^*(\hat g_h) \|_2 \leq \epsilon/(4L_2).
\end{equation}
Since \( C^*(\cdot) \) is a unique bounded linear mapping, it can be approximated by a neural network. 
By the universal approximation theorem~\cite{hornik1989multilayer}, there exists 
a neural network \( \hat{C}^* \) such that
\begin{equation}
    \|C^*(g_h) - \hat{C}^*(g_h)\|_2 < \epsilon_2.
\end{equation}
By using the triangular inequality, and choosing $\epsilon_2=\epsilon/(4L_2)$, we can write
\begin{equation}
\begin{aligned}
    & \|C^*(g_h) - \hat{C}^*(\hat{g}_h)\|_2 \\
    \leq \; & \|C^*(g_h) - C^*(\hat{g}_h)\|_2 +\|C^*(\hat{g}_h) - \hat{C}^*(\hat{g}_h)\|_2 \\
    \leq \; & \left(\frac{\epsilon}{4} + \frac{\epsilon}{4}\right)\cdot\frac{1}{L_2} = \frac{\epsilon}{2L_2}.
\end{aligned}
\end{equation}

\textbf{Then, we show that given the coefficient neural functional is a universal approximator of the optimal weights, and the basis functions has sufficient representation, the overall framework is a universal approximator.}

Let the basis \( \{ \phi_i \}_{i=1}^N \) and the number of basis functions \( N \) are chosen so that the subspace \( V_N \) is sufficiently rich to approximate any target function within a given tolerance \( \epsilon_3 > 0 \):
\begin{equation}
\label{bs_universal}
\| g_h-P_{V_N}g_h\|_\mathcal{Y}=\| g_h- B_{\phi}C^*(g_h)\|_\mathcal{Y}\leq \epsilon_3.
\end{equation}
Recalling that $g=W(h)$, the output of the operator $W(\cdot)$ with $h$ as input function, we can write
\begin{equation}
\begin{aligned}
    & \| g_h-B_{\phi}\hat{C}^*(\hat{g}_h) \|_\mathcal{Y} \\
    \leq \; & \| g_h- B_{\phi}C^*(g_h)\|_\mathcal{Y} + \|B_{\phi}\| \cdot \|(C^*(g_h)-\hat{C}^*(\hat{g}_h)\|_{2} \\
  \leq \; & \epsilon_3 +L_2\frac{\epsilon}{2L_2}=\epsilon_3+\frac{\epsilon}{2},
\end{aligned}
\end{equation}
By selecting $\epsilon_3=\epsilon/2$, we obtain
\begin{equation}
    \| g_h- B_{\phi}\hat{C}^*(\hat{g}_h)\|_\mathcal{Y}\leq \epsilon.
\end{equation}
In our case, we use B-splines as basis functions \eqref{eq:bs_approx_n_dim}, therefore
\begin{equation}
    B_{\phi}=B_{\ell,d}(x,t),
\end{equation}
and $C^*(g_h)$ represents the optimal control points tensor denoted by $C^*$. Since B-splines are universal approximators~\cite{wang2025physics}, there exists the number of control points $\ell_1, \cdots, \ell_n, \ell_t \in \mathbb{N}^+$, order of B-spline basis $d_1, \cdots, d_n, d_t \in \mathbb{N}^+$, such that 
\begin{equation}
    \Vert g_h - C^* \cdot B_{\ell,d}(x,t)\Vert_{\mathcal{Y}}\leq \epsilon_3.
\end{equation}
Since the coefficient neural functional produces $\hat C^*(\hat g_h)$ and is denoted by $G_{\theta}(h)=G_{\theta}(f,\alpha)$, we have
\[
\| g_h- G_{\theta}(f,\alpha)\cdot B_{\ell,d}(x,t)\|_\mathcal{Y}\leq \epsilon,
\]
and for our setting, $g_h = F$ is the safety probability.

\end{proof}

\end{document}